\theoremstyle{plain}
\newtheorem{Theorem}{Theorem}
\newtheorem{Definition}{Definition}
\newtheorem{Lemma}{Lemma}
\newtheorem{Proposition}{Proposition}
\theoremstyle{definition}
 \DeclareMathOperator{\so}{so}
\title[Chaplygin Kirchhoff]
{On the cases of Kirchhoff and  Chaplygin  of the Kirchhoff
equations\footnote{MCS: 70E40, 70E45, 70E20}}
\author[Dragovi\' c, Gaji\' c]
{Vladimir Dragovi\' c{$^{1,2}$}, Borislav Gaji\'c$^1$ }
\address{$^1$\,Mathematical Institute \\Serbian Academy of Science and Art \\Kneza Mihaila 36, 11000 Belgrade\\Serbia}
\address{$^2$\, GFM, University of Lisbon, Portugal}
\email{vladad@mi.sanu.ac.rs and gajab@mi.sanu.ac.rs}
\begin{document}
\abstract It is proven that the general Kirchhoff case of the
Kirchhoff equations for $B\ne 0$ is not algebraic complete
integrable system. Similar analytic behavior of the general
solution of the Chaplygin case is detected. Four-dimensional
analogues of the Kirchhoff and the Chaplygin cases are defined on
$e(4)$ with the standard Lie-Poisson bracket.
\endabstract
\maketitle

\section{Introduction}

Starting from the fact that both in the Euler and the Lagrange cases
of motion of a heavy rigid body fixed at a point, the general
solutions are meromorphic as functions of complex time, Sofia
Kowalevski in \cite{Kow} had proposed a problem of finding all cases
of such rigid-body motion that have general solutions with the same
analytic properties. Kowalevski had discovered that there was only
one extra case,  with such dynamics, what is nowadays called {\it
the Kowalevski case}. Moreover, Kowalevski had found an additional
polynomial first integral (of fourth degree) and she solved the
system in terms of genus two theta-functions. In mid 1970's Kozlov
proved (see\cite{Koz}) a nonexistence of fourth analytical first
integral in heavy rigid-body dynamics, except in the three cases
mentioned above. Thus, it appears that in the case of motion of a
heavy rigid body fixed at a point, a fourth first integral exists
exactly in cases when the general solutions are meromorphic
functions of complex time.

Further development of such analytic theory of differential
equations was associated with Painlev\'e.

A modern theory of so-called algebraic complete integrable systems
has been performed by many authors, and the most fundamental results
 and precise definitions can be found in the book \cite{AvMV}
 (see also \cite{AvM}).

It has generally been assumed among the experts as a strong believe
that the same
parallelism between complete integrability and meromorphicity of
general solutions exists also in a similar case of motion of a rigid body in
an ideal incompressible fluid which rest in infinity, described by
so-called Kirchhoff equations. The known integrable cases are those
of Kirchhoff, the Clebsch case, the Steklov-Lyapunov case, the
Sokolov case. For the full list of known integrable cases see for
example \cite{BM}.

The main result of the present paper is the proof that the general
Kirchhoff case, for $B\ne 0$ (see below for notations) is not
algebraic complete integrable. Its general solution is not a meromorphic
function of complex time. Thus, there are classical, very well-known
cases of integrable dynamics of Kirchhoff equations which are not
algebraic integrable and which do not pass the so-called
Kowalevski-Painlev\'e test. In the proof we use the method  of  small parameter,
and historically it goes back to Poincar\'e.

We also study the Chaplygin case, which is classically known
perturbation of the Kirchhoff case and known to be non-integrable (see
\cite{KO}). By the same method of the small parameter we detect similar behavior of the
general solution of the Chaplygin case as in the general Kirchhoff
case.

Finally, we study four-dimensional generalizations and we define
appropriate analogues of the Kirchhoff and of the Chaplygin case.

\section{Three-dimensional Kirchhoff equations}

The motion of a rigid body in an ideal incompressible fluid which
rest in infinity is described by the Kirchhoff equations \cite{K, La, BM}:
\begin{equation}
\begin{aligned}
 \dot {M}&={M}\times {\frac{\partial H}{\partial M}}+
{p}\times{\frac{\partial H}{\partial p}},\qquad
\dot {p}={p}\times{\frac{\partial H}{\partial M}}\\
H&=\frac 12\langle AM,M\rangle+\langle BM, p\rangle+\frac12\langle Cp, p\rangle
\end{aligned}
\label{ke}
\end{equation}
Here $M$ and $p$ are the impulsive moment and the impulsive force.
The matrices $B$ and $C$ are symmetric, $A$ is a symmetric
positive-definite matrix, and they reflect the geometry and the
mass distribution of the body. The equations \eqref{ke} are
Hamiltonian on Lie algebra $e(3)$ with the standard Poisson
structure
$$
\{M_i,M_j\}=-\epsilon_{ijk}M_k,\quad\{M_i,p_j\}=-\epsilon_{ijk}p_k
$$
The Casimir functions are $ F_2=\langle M, p\rangle, \quad
F_3=\langle p, p\rangle$. Thus, for complete integrability in the
Liouville sense, apart from the Hamiltonian $F_1=H$, one needs one
additional first integral. The Kirchhoff case (1870) is defined by
the conditions (see \cite{K}):
$$ A=diag(a_1, a_1, a_3), \
B=diag(b_1, b_1, b_3)\ C=diag(c_1, c_1, c_3).
$$
An additional first integral is $F_4=M_3$. Thus, the Kirchhoff case is in a
sense, analogous to the Lagrange case of motion of a heavy rigid
body fixed at a point.

\section{Algebraic integrability of Kirchhoff case}

When $B=0$, the Kirchhoff case can be seen as a special case of the
Clebsch integrable case defined by $
\frac{c_2-c_3}{a_1}+\frac{c_3-c_1}{a_2}+\frac{c_1-c_2}{a_3}=0.$
A Lax representation and a higher-dimensional generalization of the
Clebsch case are given by Perelomov (see \cite{P}). It is known that
the Clebsch case is algebraic complete integrable system (see for
example \cite{L}).

Let us pass to the case when $B\ne0$. The Kowalevski-Painlev\'e test
gives  necessary conditions for an $n$-dimensional system to be
algebraic completely integrable: the system has to posses Laurent
solutions depending of $n-1$ parameters. The vector field in the
Kirchhoff equations is  homogeneous quadratic in $M_i, p_i$. One can
expect solutions which admit poles of the first order:
$M_i=\frac{M_i^{(0)}}{t}+M_i^{(1)}+...$, $p_i=\frac{p_i^{(0)}}{t}+p_i^{(1)}+...
$. By plugging it into equations \eqref{ke} one gets for
$M_i^{(0)}$ and $p_i^{(0)}$:
\begin{equation}
\begin{aligned}
-M_1^{(0)}&=(b_3-b_1)M_2^{(0)}p_3^{(0)}+(c_3-c_1)p_2^{(0)}p_3^{(0)}\\
-M_2^{(0)}&=(b_1-b_3)M_1^{(0)}p_3^{(0)}+(c_1-c_3)p_1^{(0)}p_3^{(0)}\\
-M_3^{(0)}&=0\\
-p_1^{(0)}&=-a_1M_2^{(0)}p_3^{(0)}+(b_3-b_1)p_2^{(0)}p_3^{(0)}\\
-p_2^{(0)}&=a_1M_1^{(0)}p_3^{(0)}+(b_1-b_3)p_1^{(0)}p_3^{(0)}\\
-p_3^{(0)}&=a_1(p_1^{(0)}M_2^{(0)}-p_2^{(0)}M_1^{(0)})
\end{aligned}
\label{gb}
\end{equation}
From the first integrals one can conclude that $M_1^{(0)}p_1^{(0)}+M_2^{(0)}p_2^{(0)}+M_3^{(0)}p_3^{(0)}=0$,
but from \eqref{gb} one has: $M_1^{(0)}p_1^{(0)}+M_2^{(0)}p_2^{(0)}+M_3^{(0)}p_3^{(0)}=(b_3-b_1){p_3^{(0)}}^2$.
Consequently, we have that $b_3=b_1$ or $p_3^{(0)}=0$. In the first
case one has that the mixed part $\langle BM, p\rangle$ in the
Hamiltonian is a Casimir function. Thus, this part can be dropped
from the Hamiltonian and we come to the case $B=0$. If
$p_3^{(0)}=0$, equations \eqref{gb} have only a trivial solution.
Thus, they do not admit a Laurent solution with poles of the first
order.

Thus, we have the following
\begin{Lemma}
a) In the case $B=0$ the Kirchhoff case as a special case of the
Clebsch case is algebraic complete integrable system.

b) In the case $B\ne 0$ the Kirchhoff case does not pass the
Kowalevski-Painlev\'e test.
\end{Lemma}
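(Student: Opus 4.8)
I would establish the two parts separately.

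\textbf{Part a).} The plan is to reduce a) to the known algebraic complete integrability of the Clebsch case by observing that the Kirchhoff symmetry already forces the Clebsch condition. Substituting the Kirchhoff relations $a_2=a_1$, $c_2=c_1$ into
\[
\frac{c_2-c_3}{a_1}+\frac{c_3-c_1}{a_2}+\frac{c_1-c_2}{a_3}
\]
makes it collapse to $\tfrac{c_1-c_3}{a_1}+\tfrac{c_3-c_1}{a_1}+0=0$; hence, for $B=0$, the Kirchhoff Hamiltonian is a member of the Clebsch family and the Clebsch condition holds identically along the Kirchhoff locus. One then invokes the Lax representation with spectral parameter of \cite{P} together with the result of \cite{L} (and the framework of \cite{AvMV}): once the Clebsch condition holds, the generic complexified common level set of the four commuting integrals $F_1,\dots,F_4$ is an affine part of an abelian variety built from the spectral curve, on which every flow is linear. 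Since the Kirchhoff parameters satisfy that condition, a) follows; the only remaining check is that for generic $a_1,a_3,c_1,c_3$ the spectral curve keeps the expected genus, so that the level set is a genuine abelian variety and no degeneration occurs at the Kirchhoff locus.

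\textbf{Part b).} Here I would complete the dominant-balance (Kowalevski-Painlev\'e) analysis begun above. Since the right-hand side of \eqref{ke} is homogeneous quadratic in $M_i,p_i$, a weight count ($M\sim t^{-k}$ forces $-k-1=-2k$) shows the only admissible pole order is one, so one looks for Laurent expansions $M_i=M_i^{(0)}t^{-1}+M_i^{(1)}+\cdots$, $p_i=p_i^{(0)}t^{-1}+p_i^{(1)}+\cdots$, whose leading coefficients solve the residue system \eqref{gb}. The decisive step is to confront \eqref{gb} with the Casimir $F_2=\langle M,p\rangle$: since $F_2$ is constant and quadratic while the expansion starts at order $t^{-1}$, the coefficient $\langle M^{(0)},p^{(0)}\rangle$ of $t^{-2}$ must vanish, whereas eliminating $M_1^{(0)}p_2^{(0)}-M_2^{(0)}p_1^{(0)}$ among the equations of \eqref{gb} expresses $\langle M^{(0)},p^{(0)}\rangle$ as a nonzero multiple of $(b_3-b_1)(p_3^{(0)})^2$. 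Hence $(b_3-b_1)(p_3^{(0)})^2=0$. If $b_3=b_1$ then $B=b_1 I$, so $\langle BM,p\rangle=b_1\langle M,p\rangle$ is a multiple of a Casimir and can be dropped from $H$ without altering \eqref{ke}, returning us to the case $B=0$. If instead $p_3^{(0)}=0$, reading \eqref{gb} line by line forces $M_1^{(0)}=M_2^{(0)}=M_3^{(0)}=p_1^{(0)}=p_2^{(0)}=0$ as well, i.e. the solution is regular. Thus, for $B\ne 0$ with $b_3\ne b_1$ the equations admit no Laurent solution with a pole at all, and a fortiori no family of such solutions depending on $n-1=5$ free parameters ($n=\dim e(3)=6$); the necessary condition of the Kowalevski-Painlev\'e test therefore fails.

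The step I expect to be the main obstacle is not any individual calculation but making the dominant-balance analysis exhaustive: one must be sure that no balance other than the pole-order-one ansatz above can supply the required $(n-1)$-parameter Laurent family. Quadratic homogeneity excludes higher pole orders, and any balance in which only a proper subset of the $M_i,p_i$ is singular is, after the same argument with $F_2$, forced back into the degenerate residue system already treated; once this is written out, b) follows immediately. For a) the analogous point of care is merely the nondegeneracy of the spectral curve along the Kirchhoff locus, which holds generically.
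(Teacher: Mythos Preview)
Your argument is correct and follows essentially the same route as the paper: for a) you verify the Clebsch condition on the Kirchhoff locus and cite \cite{P,L}, and for b) you derive the residue system \eqref{gb}, use the Casimir $\langle M,p\rangle$ to force $(b_3-b_1)(p_3^{(0)})^2=0$, and dispose of the two resulting branches exactly as the paper does. Your additional remarks on the exhaustiveness of the dominant balance and on the nondegeneracy of the spectral curve go slightly beyond what the paper spells out, but they are sound refinements rather than a different approach.
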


Thus, a  natural question arises:

{\it In the general case $B\ne 0$ is the Kirchhoff case
algebraic complete integrable system?}

In order to answer  the question, we are going to apply the method
of small parameter, following some ideas of Lyapunov (for the
details and very interesting history of the subject, see \cite{G}).

Let us choose  $b_3-b_1=\epsilon$ as our small parameter. Then, the
equations of motion in the Kirchhoff case can be rewritten:
\begin{equation}
\begin{aligned}
\dot{M}_1&=(a_3-a_1)M_2M_3+\epsilon(M_2p_3+M_3p_2)+(c_3-c_1)p_2p_3\\
\dot{M}_2&=(a_1-a_3)M_2M_3-\epsilon(M_1p_3+M_3p_1)+(c_1-c_3)p_1p_3\\
\dot{M}_3&=0\\
\dot{p}_1&=a_3M_3p_2-a_1M_2p_3+\epsilon p_2p_3\\
\dot{p}_2&=a_1M_1p_3-a_3M_3p_1-\epsilon p_1p_3\\
\dot{p}_3&=a_1(p_1M_2-p_2M_1)
\end{aligned}
\label{ks}
\end{equation}
The unperturbed system, defined by $\epsilon=0$, has  a particular
solution $M_i=\frac{M_i^{0}}{t}$, $p_i=\frac{p_i^{0}}{t}$ where:
\begin{alignat}{3}
M_1^{0}&=\frac{i\alpha}{a_1},& M_2^{0}&=\frac{i\beta}{a_1},& M_3^{0}&=0,\label{np1}\\
p_1^{0}&=\frac{i\beta}{\sqrt{a_1(c_3-c_1)}},\,\,\,& p_2^{0}&=\frac{-i\alpha}{\sqrt{a_1(c_3-c_1)}},\,\,\,& p_3^{0}&=\frac{1}{\sqrt{a_1(c_3-c_1)}}
\label{np2}
\end{alignat}
and $\alpha$, $\beta$ are constants that satisfy
$\alpha^2+\beta^2=1$. The expressions \eqref{np1}, \eqref{np2} are
the first terms in the Laurent series for a solution
$M_i=\frac{M_i^{0}}{t}+\epsilon M_i^{1}+...$,
$p_i=\frac{p_i^{0}}{t}+\epsilon p_i^{1}+...$ of the equations
\eqref{ks}. For the second terms one gets the following system:
\begin{equation}
\begin{aligned}
\dot{M}^1_1&=(a_3-a_1)\frac{M_2^0 M_3^1}{t}+\frac{M_2^0p_3^0}{t^2}+(c_3-c_1)\frac{p_2^0p_3^1}{t}+(c_3-c_1)\frac{p_2^1p_3^0}{t}\\
\dot{M}^1_2&=(a_1-a_3)\frac{M_1^0 M_3^1}{t}-\frac{M_1^0p_3^0}{t^2}+(c_1-c_3)\frac{p_1^0p_3^1}{t}+(c_1-c_3)\frac{p_1^1p_3^0}{t}\\
\dot{M}^1_3&=0\\
\dot{p}^1_1&=a_3\frac{M_3^1p_2^0}{t}-a_1\frac{M_2^0p_3^1}{t}-a_1\frac{M_2^1p_3^0}{t}+\frac{p_2^0p_3^0}{t^2}\\
\dot{p}^1_2&=-a_3\frac{M_3^1p_1^0}{t}+a_1\frac{M_1^0p_3^1}{t}+a_1\frac{M_1^1p_3^0}{t}-\frac{p_1^0p_3^0}{t^2}\\
\dot{p}^1_3&=a_1(\frac{p^0_1M^1_2}{t}+\frac{p^1_1M^0_2}{t}-\frac{p^0_2M^1_1}{t}-\frac{p^1_2M^0_1}{t})
\end{aligned}
\label{prvi}
\end{equation}
First we will find the solutions of the homogeneous system
\begin{equation}
\begin{aligned}
\dot{M}^1_1&=(c_3-c_1)\frac{p_2^0p_3^1}{t}+(c_3-c_1)\frac{p_2^1p_3^0}{t},\quad&\dot{p}^1_1&=-a_1\frac{M_2^0p_3^1}{t}-a_1\frac{M_2^1p_3^0}{t}\\
\dot{M}^1_2&=(c_1-c_3)\frac{p_1^0p_3^1}{t}+(c_1-c_3)\frac{p_1^1p_3^0}{t},\quad& \dot{p}^1_2&=a_1\frac{M_1^0p_3^1}{t}+a_1\frac{M_1^1p_3^0}{t}\\
\dot{M}^1_3&=0\quad &\dot{p}^1_3&=a_1(\frac{p^0_1M^1_2}{t}+\frac{p^1_1M^0_2}{t}-\frac{p^0_2M^1_1}{t}-\frac{p^1_2M^0_1}{t})
\end{aligned}
\label{prvihom}
\end{equation}
of the form: $M_1^1=t^sM,\ M_2^1=t^s N,\ p_1=t^s\,P,\ p_2=t^sQ, p_3=t^s R$. One gets
\begin{equation}
\begin{aligned}
sM&=(c_3-c_1)(p_2^0R+p_3^0N),\quad
sN=(c_1-c_3)(p_1^0R+p_3^0P)\\
sP&=-a_1M_2^0R-a_1p_3^0N,\qquad
sQ=a_1M_1^0R+a_1p_3^0M\\
sR&=a_1p_1^0N+a_1M_2^0P-a_1p_2^0M-a_1M_1^0Q.&&
\end{aligned}
\label{jedzas}
\end{equation}
This is a homogeneous system of linear equations and it has
nontrivial solutions only when the determinant of the system is
zero:
$$
(s-1)^3(s+1)(s+2)=0
$$
By solving the system \eqref{jedzas} for $s=1$, $s=-1$, and $s=-2$
one gets the general solution of system \eqref{prvi}:
\begin{equation}
\begin{aligned}
M_1^1&=\frac{i\alpha a}{t^2}k_5-\frac{a\beta}{t}k_4+a(k_2-i\alpha k_3)t& p_1^1&=\frac{i\beta}{t^2}k_5+\frac{\alpha}{t}k_4+k_1t\\
M_2^1&=\frac{i\beta a}{t^2}k_5+\frac{a\alpha}{t}k_4+a(-k_1-i\beta k_3)t& p_2^1&=\frac{-i\alpha}{t^2}k_5+\frac{\beta}{t}k_4+k_2t\\
&&p_3^1&=\frac{k_5}{t^2}+k_3t,
\end{aligned}
\label{opster}
\end{equation}
where $a=\sqrt{\frac{(c_3-c_1)}{a_1}}$. We will find the solution of
\eqref{prvi} by using the standard method of variation of constants.
We have:
\begin{equation*}
\begin{aligned}
\frac{i\alpha a}{t^2}k'_5-\frac{a\beta}{t}k'_4+a(k'_2-i\alpha k'_3)t&=(a_3-a_1)\frac{M_2^0 M_3^1}{t}+\frac{M_2^0p_3^0}{t^2}\\
\frac{i\beta a}{t^2}k'_5+\frac{a\alpha}{t}k'_4+a(-k'_1-i\beta k'_3)t&=(a_1-a_3)\frac{M_1^0 M_3^1}{t}-\frac{M_1^0p_3^0}{t^2}\\
\frac{i\beta}{t^2}k'_5+\frac{\alpha}{t}k'_4+k'_1t&=a_3\frac{M_3^1p_2^0}{t}+\frac{p_2^0p_3^0}{t^2}\\
\frac{-i\alpha}{t^2}k'_5+\frac{\beta}{t}k'_4+k'_2t&=-a_3\frac{M_3^1p_1^0}{t}-\frac{p_1^0p_3^0}{t^2}\\
\frac{k'_5}{t^2}+k'_3t&=0
\end{aligned}
\end{equation*}
and one gets
\begin{equation*}
\begin{aligned}
k_1(t)&=\frac{i\alpha M_3^1}{2a}\frac{1}{t}+k_1,\quad
k_2(t)=\frac{i\beta M_3^1}{2a}\frac{1}{t}+k_2,\quad
k_3(t)=k_3\\
k_4(t)&=\frac{-i(2a_3-a_1)M_3^1}{\sqrt{a_1(c_3-c_1)}}t-\frac{i}{a_1(c_3-c_1)}ln(t)+k_4,\quad
k_5(t)=k_5.
\end{aligned}
\end{equation*}
Thus, $k_4(t)$ is not a uniform function of complex time and we get
proved the following theorem:
\begin{Theorem} When $B\ne0$, the Kirchhoff case of the Kirchhoff equations is not algebraic integrable system.
\end{Theorem}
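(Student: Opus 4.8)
The plan is to use the method of small parameter à la Poincaré--Lyapunov: exhibit a one-parameter family of particular solutions of the perturbed Kirchhoff system \eqref{ks} whose Laurent expansion in $t$ around $t=0$, taken as a power series in $\epsilon$, fails to be single-valued already at first order in $\epsilon$. If the system were algebraic completely integrable, its generic solutions would be meromorphic (single-valued) in complex time, and in particular this particular family — being a limit/specialization of generic orbits — would have to be meromorphic as well; producing a logarithmic term is then a contradiction. So the strategy is: (1) start from a known meromorphic particular solution of the unperturbed ($\epsilon=0$) system; (2) linearize in $\epsilon$ to get a linear nonautonomous ODE system for the first-order corrections $M_i^1,p_i^1$; (3) solve the associated homogeneous linear system explicitly via the scaling ansatz $t^s(\cdots)$, reading off the exponents $s$ from the characteristic equation; (4) apply variation of constants to the inhomogeneous system and check whether a $\ln t$ (resonant) term is forced.

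Concretely, first I would take the unperturbed particular solution \eqref{np1}--\eqref{np2}, $M_i=M_i^0/t$, $p_i=p_i^0/t$ with $\alpha^2+\beta^2=1$; one checks directly that it solves \eqref{ks} at $\epsilon=0$. Next I would insert $M_i=M_i^0/t+\epsilon M_i^1+\dots$, $p_i=p_i^0/t+\epsilon p_i^1+\dots$ into \eqref{ks} and collect the $O(\epsilon)$ terms, arriving at the linear system \eqref{prvi} with forcing terms proportional to $1/t$ and $1/t^2$ coming from the $\epsilon$-terms of \eqref{ks} evaluated on the zeroth-order solution. Then I would solve the homogeneous part \eqref{prvihom} by the ansatz $M_1^1=t^sM$, etc., which reduces to the linear algebraic system \eqref{jedzas}; its solvability condition is the vanishing of the $5\times 5$ determinant, which factors as $(s-1)^3(s+1)(s+2)=0$. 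Solving \eqref{jedzas} at $s=1,-1,-2$ produces the five-parameter fundamental system \eqref{opster} (with $a=\sqrt{(c_3-c_1)/a_1}$ and free constants $k_1,\dots,k_5$).

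Finally I would run variation of constants: promote $k_1,\dots,k_5$ to functions $k_j(t)$ and solve the resulting triangular-ish linear system for the $k_j'(t)$, using also $M_3^1=\mathrm{const}$ (since $\dot M_3^1=0$). Integrating, one finds $k_1(t),k_2(t),k_3(t)$ are Laurent (rational in $t$), but $k_4(t)$ picks up a term $-\frac{i}{a_1(c_3-c_1)}\ln t$ from integrating a $1/t$ contribution on the right-hand side. Hence $p_3^1$ (and $M_1^1,M_2^1$ through the $k_4$ coefficient) contains $\ln t$, so the first-order-in-$\epsilon$ correction to this particular solution is not a uniform (single-valued) function of complex time. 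This is incompatible with algebraic complete integrability, which would require a meromorphic generic solution; therefore the Kirchhoff case with $B\ne 0$ is not algebraic completely integrable.

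The main obstacle — and the step requiring genuine care — is step (4): one must verify that the $\ln t$ term is genuinely \emph{forced} and cannot be removed by a different choice of the integration constants or by a subtler resonance cancellation. Because the exponent $s=1$ is a triple root of the characteristic polynomial, there is a resonance between the homogeneous solution growing like $t$ and a forcing term of order $1/t$ (after the relevant shift), and one has to check that the coefficient of $\ln t$ in $k_4(t)$, namely $-i/(a_1(c_3-c_1))$, is nonzero (it is, since $a_1>0$ and $c_3\ne c_1$ — the latter being exactly the nondegeneracy needed for the unperturbed solution \eqref{np2} to exist). The rest is bookkeeping: tracking how the $\ln t$ in $k_4(t)$ propagates into $M_1^1,M_2^1,p_1^1,p_2^1,p_3^1$ and confirming no miraculous cancellation occurs among them.
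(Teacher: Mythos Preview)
Your proposal is correct and follows essentially the same approach as the paper: the small-parameter perturbation around the Laurent solution \eqref{np1}--\eqref{np2}, the ansatz $t^s$ for the homogeneous system leading to the characteristic polynomial $(s-1)^3(s+1)(s+2)$, the fundamental system \eqref{opster}, and variation of constants producing the forced logarithm in $k_4(t)$. One minor slip: from \eqref{opster} the coefficient $k_4$ enters $M_1^1,M_2^1,p_1^1,p_2^1$ but \emph{not} $p_3^1$, so the $\ln t$ appears in those four components rather than in $p_3^1$; this does not affect the argument.
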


\section{Three-dimensional Chaplygin case}

In 1897 Chaplygin (see \cite{Ch}) defined a case which instead of a
first integral had an invariant relation. This system has also been
considered in  \cite{KO} where nonintegrability of the system has
been proven. The Chaplygin system is defined for $A=diag(a_1, a_2, a_3)$ by:
\begin{equation}
\begin{aligned}
b_{13}&\sqrt{a_2-a_1}\mp(b_2-b_1)\sqrt{a_3-a_2}=0,\, b_{12}=0\\
b_{13}&\sqrt{a_3-a_2}\pm(b_3-b_2)\sqrt{a_2-a_1}=0,\, b_{23}=0\\
c_{13}&\sqrt{a_2-a_1}\mp(c_2-c_1)\sqrt{a_3-a_2}=0,\, c_{12}=0\\
c_{13}&\sqrt{a_3-a_2}\pm(c_3-c_2)\sqrt{a_2-a_1}=0,\, c_{23}=0
\end{aligned}
\label{cc}
\end{equation}

The invariant relation is: $F_4=M_1\sqrt{a_2-a_1}\mp M_3\sqrt{a_3-a_2}=0$.

Conditions \eqref{cc} may be seen as certain  analogous of the
Hess-Appel'rot conditions in the case of motion of a heavy rigid
body fixed at a point. A geometric interpretation of Hess-Appel'rot
conditions has been given by Zhukovski. Using it, one can see the
Hess-Appel'rot case as a perturbation of the Lagrange top. Detailed
analysis of this system as well as higher dimensional
generalizations, the Lax representation and bi-Hamiltonian
properties are given in \cite{DG, DG2}. The class of systems of
Hess-Appel'rot type is defined there also. Starting form the basic
properties of this class od systems a new set of examples are
constructed in \cite{DGJ}.

Similarly, the Chaplygin case is a perturbation of the Kirchhoff
case. Let us choose the basis where $a_1=a_2$. In this new basis,
the Chaplygin conditions are (see for example \cite{BM}): $ a_1=a_2, a_{13}\ne 0, B=diag (b_1, b_1, b_3), C=diag (c_1, c_1, c_3)$.
The Hamiltonian becomes $2H=a_1(M_1^2+M_2^2)+a_3M_3+{2a_{13}M_1M_3}+
2b_1(M_1p_1+M_2p_2)+2b_3M_3p_3+c_1(p_1^2+p_2^2)+c_3p_3^2$. Thus $2H=H_K+2a_{13}M_1M_3$
where $H_K$ is the Hamiltonian for the Kirchhoff case. In the new
coordinates the invariant relation is $M_3=0$.

Let us apply now the  method of small parameter to the Chaplygin
case when $B=0$. We will consider $a_{13}$ as a small parameter
$\epsilon$. The equations are:
\begin{equation}
\begin{aligned}
\dot{M}_1&=(a_3-a_1)M_2M_3+(c_3-c_1)p_2p_3+\epsilon M_1M_2\\
\dot{M}_2&=(a_1-a_3)M_2M_3+(c_1-c_3)p_1p_3+\epsilon(M_3^2-M_1^2)\\
\dot{M}_3&=-\epsilon M_2M_3\\
\dot{p}_1&=a_3M_3p_2-a_1M_2p_3 +\epsilon M_1p_2\\
\dot{p}_2&=a_1M_1p_3-a_3M_3p_1+\epsilon(M_3p_3-M_1p_1)\\
\dot{p}_3&=a_1(p_1M_2-p_2M_1)-\epsilon M_3p_2
\end{aligned}
\label{ks1}
\end{equation}
We assume $M_3=0$. The unperturbed system coincides with the
unperturbed system of equations \eqref{ks}. Hence, it has a
particular solution \eqref{np1}, \eqref{np2}. For the terms of order
$\epsilon$, one gets the system which homogeneous part is
\eqref{prvihom}. Applying again the method of variation of
constants, one gets the solutions \eqref{opster} where
\begin{equation*}
\begin{aligned}
k_1(t)&=k_1,\quad
k_2(t)=k_2,\quad
k_3(t)=k_3,\\
k_4(t)&=\frac{\alpha}{a_1\sqrt{a_1(c_3-c_1)}}ln(t)+k_4,\quad k_5(t)=k_5.
\end{aligned}
\end{equation*}
From the Lax representation
for the Clebsch case, given by Perelomov in \cite{P}, a Lax
representation for the Kirchhoff case when $B=0$ can be obtained. A Lax
representation for the Chaplygin case, given below is a
perturbation of that one, given by Perelomov.

\begin{Theorem} When $B=0$, on the invariant manifold given by the invariant relation,
the equations of motion of the  Chaplygin case
are equivalent to the matrix equation:
$$
\dot{L}(\lambda)=[L(\lambda), Q(\lambda)]
$$
where $ L(\lambda)=\lambda^2L_2+\lambda L_1-L_0$,
$Q(\lambda)=\lambda Q_1+Q_0$, and
$$
L_2=diag(c_1/a_1, c_1/a_1, c_3/a_1),\  Q_1=diag(a_1, a_1, a_3)
$$
$$
L_1=\left[\begin{matrix}
0&-M_3&M_2\\
M_3&0&-M_1\\
-M_2&M_1&0
\end{matrix}
\right]
\quad
L_0=pp^{T}
$$
$$
Q_0=\left[\begin{matrix}
0&-a_3M_3-a_{13}M_1&a_1M_2\\
a_3M_3+a_{13}M_1&0&-a_1M_1-a_{13}M_3\\
-a_1M_2&a_1M_1+a_{13}M_3&0
\end{matrix}
\right]
$$
\end{Theorem}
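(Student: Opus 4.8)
The natural approach is a direct verification: expand the proposed identity $\dot L(\lambda)=[L(\lambda),Q(\lambda)]$ in powers of the spectral parameter and check it degree by degree. Since $L(\lambda)=\lambda^2L_2+\lambda L_1-L_0$ with $L_2$ constant, the left-hand side is $\dot L(\lambda)=\lambda\dot L_1-\dot L_0$, and by bilinearity of the commutator the right-hand side equals
\[
\lambda^3[L_2,Q_1]+\lambda^2\bigl([L_1,Q_1]+[L_2,Q_0]\bigr)+\lambda\bigl([L_1,Q_0]-[L_0,Q_1]\bigr)-[L_0,Q_0].
\]
Thus the matrix equation is equivalent to the four conditions (i) $[L_2,Q_1]=0$, (ii) $[L_1,Q_1]+[L_2,Q_0]=0$, (iii) $\dot L_1=[L_1,Q_0]-[L_0,Q_1]$, (iv) $\dot L_0=[L_0,Q_0]$. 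Before starting I would record that $\{M_3=0\}$ is invariant for \eqref{ks1}, since there $\dot M_3=-\epsilon M_2M_3=0$; hence it is legitimate to run the whole computation on this manifold, where moreover the $a_{13}M_3$-entries of $Q_0$ drop out.

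Conditions (i), (iii), (iv) are the routine part. Condition (i) is immediate because $L_2$ and $Q_1$ are diagonal. For (iv), write $L_0=pp^{T}$; since $Q_0$ is skew-symmetric, $\dot{(pp^{T})}=\dot p\,p^{T}+p\,\dot p^{T}$ equals $-Q_0pp^{T}+pp^{T}Q_0=[L_0,Q_0]$ precisely when $\dot p=-Q_0p$, and one checks that the entries of $-Q_0p$ are exactly the right-hand sides of the last three equations of \eqref{ks1} with $\epsilon=a_{13}$ and $M_3=0$. For (iii), I would use the isomorphism $\mathbb R^3\to\so(3)$, $v\mapsto\hat v$ with $\hat v\,w=v\times w$: then $L_1=\hat M$, so $\dot L_1=\widehat{\dot M}$, and rewriting $[L_1,Q_0]$ via $[\hat a,\hat b]=\widehat{a\times b}$ together with the elementary identity $[pp^{T},\diag(d)]=\widehat{(\diag(d)\,p)\times p}$ turns (iii) into a vector equation that, after the usual rescaling freedom of a Lax pair, coincides with the first three equations of \eqref{ks1}.

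The crux is condition (ii): the purely algebraic identity $[L_1,Q_1]=-[L_2,Q_0]$ must hold in $M_1,M_2$ on $\{M_3=0\}$, and this is exactly what dictates the precise shape of $Q_0$ --- in particular its $a_{13}$-dependent off-diagonal entries and the way the diagonal entries of $L_2=\diag(c_1/a_1,c_1/a_1,c_3/a_1)$ are paired against those of $Q_1=\diag(a_1,a_1,a_3)$. Carrying out this $3\times3$ check, and seeing that the relations among $a_1,a_3,a_{13},c_1,c_3$ inherited from the reduced Chaplygin data \eqref{cc} make the two commutators cancel, is the main bookkeeping obstacle; it is also the reason the Lax equation holds only after restriction to the invariant manifold, which parallels the fact that the Chaplygin case possesses an invariant relation rather than a genuine fourth integral. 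Once (i)--(iv) are established the equivalence follows in both directions, since (iii) and (iv) are literally \eqref{ks1} on $\{M_3=0\}$ while (i) and (ii) are identities. I would close by noting that setting $a_{13}=0$ recovers Perelomov's Lax pair \cite{P} for the Clebsch case (to which the $B=0$ Kirchhoff case specializes), so the pair above is a consistent $a_{13}$-deformation of that one.
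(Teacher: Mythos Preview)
The paper gives no proof of this theorem; it is stated and immediately followed by the computation of the spectral curve. Your plan of expanding $\dot L(\lambda)=[L(\lambda),Q(\lambda)]$ in powers of $\lambda$ and checking (i)--(iv) is therefore exactly the direct verification the paper is tacitly leaving to the reader, and your treatment of (i), (iv), and the skeleton of (iii) is fine.

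There is, however, a genuine gap at condition (ii), and it propagates into (iii). You assert that the cancellation $[L_1,Q_1]+[L_2,Q_0]=0$ is forced by ``relations among $a_1,a_3,a_{13},c_1,c_3$ inherited from the reduced Chaplygin data \eqref{cc}.'' No such relations exist: in the basis with $a_1=a_2$ the Chaplygin conditions say only that $a_{13}\neq0$ and that $B,C$ are diagonal with $b_1=b_2$, $c_1=c_2$; they impose \emph{no} link between the $a_i$ and the $c_i$. If one actually carries out the computation on $\{M_3=0\}$, the $(1,3)$ and $(2,3)$ entries of $[L_1,Q_1]+[L_2,Q_0]$ are proportional to $(a_3-a_1)-(c_3-c_1)$, and the same combination reappears when one compares the $p_ip_3$ terms coming from $-[L_0,Q_1]$ in (iii) with the $\dot M_1,\dot M_2$ equations of \eqref{ks1}. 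Thus the Lax identity with the stated matrices holds only under the extra normalisation $a_3-a_1=c_3-c_1$, which is not part of the Chaplygin hypotheses. Either the statement carries an unstated normalisation (or a typo in $L_2$ or $Q_1$), or your verification would in fact stall precisely at the step you flagged as the crux; in either case the justification you give for (ii) is wrong, and the appeal to ``the usual rescaling freedom of a Lax pair'' in (iii) cannot repair it, since once the matrices are fixed the identity either holds or it does not.
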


The spectral curve $det(L(\lambda)-\mu\cdot 1)=0$ is:
$$
\begin{aligned}
\Gamma:\qquad \mu^3&+\mu^2F_3-\lambda_1^2\mu^2(c_3+2c_1)+\\
&\lambda_1^2\mu[2F_1-(2c_1+c_3)F_3]+\lambda_1^4\mu c_1(c_1+2c_3)-\\
&\lambda_1^6 c_1^2
c_3-\lambda_1^4(2c_1F_1-c_1(c_1+c_3)F_3)+\lambda_1^2a_1F_2^2=0.
\end{aligned}
$$
It is singular and has an involution $\sigma:(\lambda_1,\mu)\to
(-\lambda_1,\mu)$. The curve $\Gamma_1=\Gamma/\sigma$ is a
nonsingular genus one curve.


\section{Higher-dimensional Kirchhoff equations-the Kirchhoff and Chaplygin cases on $e(4)$}

We will consider a generalization of the Kirchhoff equations on
$e(n)$. Let us consider the Hamiltonian equations with a Hamiltonian
function:
$$
2H=\sum A_{ijkl}M_{ij}M_{kl}+2\sum B_{ijk}M_{ij}p_k+\sum C_{kl}p_kp_l
$$
in the standard Lie-Poisson structure on $so(n)$ given by:
$$
\{M_{ij}, M_{kl}\}=\delta_{ik}M_{jl}+\delta_{jl}M_{ik}-\delta_{il}M_{jk}-\delta_{jk}M_{il}
$$
$$
\{M_{ij},p_k\}=\delta_{ik}p_j-\delta_{jk}p_i
$$

We will restrict ourselves  to dimension four. A four-dimensional
Kirchhoff case should have two linear first integrals: $M_{12}$ and
$M_{34}$. It is interesting that under such assumption, the "mixed"
term in the Hamiltonian  is missing.
\begin{Proposition}
If $M_{12}$ and $M_{34}$ are the first integrals, then $B_{ijk}=0$.
\end{Proposition}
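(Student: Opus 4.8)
The plan is to write out the Hamiltonian equations on $e(4)$ explicitly and impose that $\dot M_{12}=0$ and $\dot M_{34}=0$ identically, then read off constraints on the coefficients $B_{ijk}$. First I would recall that for the Lie-Poisson bracket on $e(4)$ the equation of motion for any function $F$ is $\dot F=\{F,H\}$, so the statement $M_{12}$ is a first integral means $\{M_{12},H\}=0$ as a polynomial identity in the phase-space coordinates $M_{ij}$ ($1\le i<j\le 4$) and $p_k$ ($1\le k\le 4$). Using the structure relations $\{M_{ij},M_{kl}\}=\delta_{ik}M_{jl}+\delta_{jl}M_{ik}-\delta_{il}M_{jk}-\delta_{jk}M_{il}$ and $\{M_{ij},p_k\}=\delta_{ik}p_j-\delta_{jk}p_i$, the bracket $\{M_{12},\cdot\}$ acts as an infinitesimal rotation in the $(1,2)$-plane on the remaining variables; concretely it rotates the pairs $(M_{13},M_{23})$, $(M_{14},M_{24})$ and $(p_1,p_2)$, and annihilates $M_{12},M_{34},p_3,p_4$. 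Similarly $\{M_{34},\cdot\}$ rotates $(M_{13},M_{14})$, $(M_{23},M_{24})$, $(p_3,p_4)$.

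The key step is then to split the Hamiltonian $2H=\sum A_{ijkl}M_{ij}M_{kl}+2\sum B_{ijk}M_{ij}p_k+\sum C_{kl}p_kp_l$ into its three homogeneous pieces — the pure-$M$ quadratic $H_A$, the mixed bilinear $H_B$, and the pure-$p$ quadratic $H_C$ — and note that since $\{M_{12},\cdot\}$ preserves the bidegree in $(M,p)$, requiring $\{M_{12},H\}=0$ forces $\{M_{12},H_A\}=\{M_{12},H_B\}=\{M_{12},H_C\}=0$ separately, and likewise for $M_{34}$. The proposition concerns only $H_B$, so I would focus on the identity $\{M_{12},H_B\}=0$ and $\{M_{34},H_B\}=0$. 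Writing $H_B=\sum_{i<j,\,k}B_{ijk}M_{ij}p_k$ and applying the two rotation operators, each monomial $M_{ij}p_k$ maps to a linear combination of monomials of the same bidegree; collecting coefficients of each distinct monomial $M_{ab}p_c$ and setting them to zero gives a linear system in the $B_{ijk}$. I expect that combining invariance under \emph{both} commuting rotations $\{M_{12},\cdot\}$ and $\{M_{34},\cdot\}$ pins every $B_{ijk}$ to zero: invariance under the $(1,2)$-rotation already forces $B_{ijk}=0$ whenever the index set $\{i,j,k\}$ meets $\{1,2\}$ in an "unbalanced" way, and the $(3,4)$-rotation kills the rest, leaving no surviving coefficient. (One should also check the diagonal-type terms like $B_{12k}$ and $B_{34k}$, where $M_{12}$ or $M_{34}$ itself appears — there $\{M_{12},M_{12}p_k\}=M_{12}\{M_{12},p_k\}$ which is nonzero for $k\in\{1,2\}$ and must be cancelled, again forcing those coefficients to vanish.)

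The main obstacle will be purely bookkeeping: there are $\binom{4}{2}\cdot 4=24$ coefficients $B_{ijk}$ and one must organize the monomial-by-monomial comparison cleanly so that no term is missed and the implication really is "$B\equiv 0$" rather than "$B$ lies in some nontrivial subspace." To keep this manageable I would group the $M_{ij}$ into the two rotation-invariant "blocks" $\{M_{12},M_{34}\}$ and the "off-block" set $\{M_{13},M_{14},M_{23},M_{24}\}$, and similarly note $p$ splits as $\{p_1,p_2\}\cup\{p_3,p_4\}$ under the $(1,2)$-rotation and as $\{p_1,p_2,p_3,p_4\}$ with $\{p_1,p_2\}$ fixed under the $(3,4)$-rotation; then the argument reduces to a handful of small $2\times2$ or $4\times4$ rotation-invariance computations rather than a single 24-variable system. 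Once each block is shown to contribute nothing, one concludes $B_{ijk}=0$ for all $i,j,k$, which is the assertion. A remark worth adding is that this is special to the requirement of \emph{two} linear integrals $M_{12}$ and $M_{34}$ simultaneously — with only one linear integral the mixed term need not disappear, which is exactly what makes the $e(4)$ Kirchhoff analogue structurally different from the $e(3)$ case and motivates the definition adopted in the sequel.
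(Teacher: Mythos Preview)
Your proposal is correct and follows essentially the same route as the paper, which simply states that the result follows by direct computation of $\dot M_{12}=\{M_{12},H\}$ and $\dot M_{34}=\{M_{34},H\}$. Your bidegree observation---that $\{M_{ab},\cdot\}$ preserves the $(M,p)$-bidegree, so the vanishing of $\{M_{12},H\}$ and $\{M_{34},H\}$ forces $\{M_{12},H_B\}=\{M_{34},H_B\}=0$ separately---is a clean way to isolate the $B$-part and makes the bookkeeping you flag entirely manageable; the paper does not spell this out, but it is implicit in any organized version of the direct calculation.
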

\begin{proof} Proof is based on the facts that $\dot M_{12}=\{M_{12}, H\}$, $\dot M_{34}=\{M_{34}, H\}$ and
follows by a direct calculations.
\end{proof}
\begin{Definition}
The four-dimensional Kirchhoff case is defined by
$$
\begin{aligned}
2H_K=&A_{1212}M_{12}^2+A_{1313}(M_{13}^2+M_{14}^2+M_{23}^2+M_{24}^2)+A_{3434}M_{34}^2+\\
&A_{1234}M_{12}M_{34}+C_{11}(p_1^2+p_2^2)+C_{33}(p_3^2+p_4^2)
\end{aligned}
$$
\end{Definition}
\begin{Theorem} The four dimensional Kirchhoff case is completely
integrable in Liouville sense.
\end{Theorem}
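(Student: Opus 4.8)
The plan is to produce four functionally independent first integrals of $H_K$ in pairwise involution on a generic symplectic leaf of $e(4)$. First I would record that $e(4)$ carries exactly two independent Casimir functions, $f_1=\langle p,p\rangle=p_1^2+p_2^2+p_3^2+p_4^2$ and $f_2=\langle p,p\rangle\sum_{i<j}M_{ij}^2-\sum_i\big(\sum_j M_{ij}p_j\big)^2$; the latter is the squared length of the Hodge dual of $M\wedge p$, hence a Casimir, since $M\wedge p$ is unchanged by the translations (which shift $M$ by $p\wedge b$, and $(p\wedge b)\wedge p=0$) and is $SO(4)$-equivariant. A standard index computation for the semidirect product $so(4)\ltimes\mathbb{R}^4$ (the stabiliser of a generic $p$ in $so(4)$ is $so(3)$, of index one) shows that these are all the Casimirs, so generic leaves are $8$-dimensional and Liouville integrability requires four independent commuting integrals.

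Next I would take $F_1=H_K$, $F_2=M_{12}$, $F_3=M_{34}$. The Lie--Poisson relations give $\{M_{12},M_{34}\}=0$ immediately, and $\{H_K,M_{12}\}=\{H_K,M_{34}\}=0$ is exactly the computation behind the Proposition; equivalently, the shape of the Hamiltonian in the Definition is precisely that of an $SO(2)_{12}\times SO(2)_{34}$-invariant function, where $SO(2)_{12},SO(2)_{34}$ are the rotations in the $(1,2)$- and $(3,4)$-planes, with momentum map $(M_{12},M_{34})$. Thus $F_1,F_2,F_3$ are three first integrals in involution.

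The heart of the matter is the fourth integral $F_4$. A short computation rules out any conserved $SO(2)_{12}\times SO(2)_{34}$-invariant quadratic in $(M,p)$ independent of $H_K,M_{12},M_{34},f_1$ (the only conserved invariant quadratic turns out to be the combination already present in $H_K$), so $F_4$ must be at least quartic. I would obtain it from a Lax representation $\dot L(\lambda)=[L(\lambda),Q(\lambda)]$ of the four-dimensional equations of motion, in the spirit of the Lax pair exhibited above for the three-dimensional Chaplygin case and of Perelomov's Clebsch Lax pair on $e(n)$: take $L(\lambda)=\lambda^2\mathcal C+\lambda M-pp^{T}$ with $\mathcal C=\diag(C_{11},C_{11},C_{33},C_{33})$ (suitably rescaled) and $M$ regarded as a $4\times4$ antisymmetric matrix, and $Q(\lambda)=\lambda\,\mathcal A+Q_0$ with $\mathcal A$ and $Q_0$ the analogues of those in the three-dimensional Lax pair, built from $A$ and $M$; one then checks that the compatibility conditions of the Lax equation are satisfied by the Definition's $H_K$ (just as the Clebsch condition is automatic in the three-dimensional Kirchhoff case). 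The coefficients of $\det(L(\lambda)-\mu\cdot 1)$, as polynomials in $\lambda$ and $\mu$, are first integrals; $f_1$, $f_2$, $H_K$ and $M_{12},M_{34}$ reappear among them, and one further coefficient, quartic in the variables, is the sought $F_4$. A fall-back route is to posit $F_4$ as the general $SO(2)_{12}\times SO(2)_{34}$-invariant quartic and solve $\{F_4,H_K\}=0$ for its coefficients directly.

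It remains to verify involutivity and independence of $\{F_1,F_2,F_3,F_4\}$ on a generic leaf. For involution: $\{F_i,H_K\}=0$ since each $F_i$ is a first integral; $\{F_4,M_{12}\}=\{F_4,M_{34}\}=0$ because $F_4$, being built from $\mathcal C$, $M$ and $p$ in an $SO(2)_{12}\times SO(2)_{34}$-invariant way, commutes with the momentum map (alternatively, the whole family is in involution via the Adler--Kostant--Symes/$r$-matrix structure underlying the Lax pair); the remaining brackets vanish as above. For independence it suffices to exhibit one point of a generic leaf where $dH_K,dM_{12},dM_{34},dF_4$ are linearly independent, a one-point check. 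The main obstacle is the third step: writing down the correct Lax pair (equivalently, pinning down the quartic integral $F_4$) and confirming that the four integrals are genuinely independent on the leaf; everything else is either immediate or a routine computation.
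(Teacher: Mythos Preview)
Your structural setup matches the paper's: two Casimirs on $e(4)$, generic leaves of dimension eight, the three commuting integrals $H_K$, $M_{12}$, $M_{34}$, and the need for one more. Where you diverge is in how to obtain the fourth integral. The paper does not construct a Lax pair; it simply writes the integral down,
\[
F_5 \;=\; a_1\bigl(M_{12}M_{34}+M_{14}M_{23}-M_{13}M_{24}\bigr)^2
\;-\; c_1\bigl(q_1^2+q_4^2\bigr)\;-\; c_3\bigl(q_2^2+q_3^2\bigr),
\]
where $q_1=M_{13}p_4-M_{14}p_3+M_{34}p_1$, $q_2=M_{23}p_1+M_{12}p_3-M_{13}p_2$, $q_3=M_{24}p_1-M_{14}p_2+M_{12}p_4$, $q_4=M_{23}p_4+M_{34}p_2-M_{24}p_3$ are precisely the four quadratics whose squares sum to the second Casimir. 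Thus $F_5$ is the Pfaffian of $M$ squared, corrected by a $(c_1,c_3)$-weighted version of that Casimir. (The paper calls it ``quadratic'', presumably meaning quadratic in these building blocks; in the raw variables it is quartic, as you correctly argued.)

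The gap in your proposal is that the central step---producing the fourth integral---is left as a programme. The Perelomov-type Lax pair you sketch may well work, but the compatibility you defer with ``one then checks'' is exactly the nontrivial content, and your fall-back of solving $\{F_4,H_K\}=0$ over the full space of invariant quartics involves many coefficients. The idea you are missing is the specific ansatz: the Pfaffian $\mathrm{Pf}(M)$ and the four $q_i$ are the natural $SO(2)_{12}\times SO(2)_{34}$-invariant quadratics built from $M$ and $M\wedge p$, so trying $\alpha\,\mathrm{Pf}(M)^2+\sum_i\beta_i q_i^2$ reduces the search to a handful of parameters and makes $\{F_5,H_K\}=0$ a short computation. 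Involutivity of $F_5$ with $M_{12},M_{34}$ is then automatic from invariance, exactly as you say; neither the paper nor you carries out the independence check, which remains a one-point verification in both approaches.
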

\begin{proof}
On $e(4)$ the standard Lie - Poisson structure has  two Casimir
functions:
\begin{equation*}
\begin{aligned}
F_1=&p_1^2+p_2^2+p_3^2+p_4^2,\\
F_2=&(M_{13}p_4-M_{14}p_3+M_{34}p_1)^2+(M_{23}p_1+M_{12}p_{3}-M_{13}p_2)^2+\\
&(M_{24}p_1-M_{14}p_2+M_{12}p_4)^2+(M_{23}p_4+M_{34}p_2-M_{24}p_3)^2
\end{aligned}
\end{equation*}
Thus, general  symplectic leaves are 8-dimensional and for complete
integrability one needs four first integrals in involution. Except
Hamiltonian, we have two linear first integrals $F_3=M_{12}$,
$F_4=M_{34}$ and one additional quadratic first integral:
\begin{equation*}
\begin{aligned}
F_5&=a_1(M_{12}M_{34}+M_{14}M_{23}-M_{13}M_{24})^2\\
&-c_1((M_{13}p_4-M_{14}p_3+M_{34}p_1)^2+(M_{23}p_4+M_{34}p_2-M_{24}p_3)^2)\\
&-c_3((M_{23}p_1+M_{12}p_{3}-M_{13}p_2)^2+(M_{24}p_1-M_{14}p_2+M_{12}p_4)^2)
\end{aligned}
\end{equation*}
\end{proof}

In the case of four-dimensional Chaplygin case, one can naturally
assume that $M_{12}$ and $M_{34}$ are  invariant relations. From
this assumption, we get:
\begin{Definition}
The four-dimensional Chaplygin case of the Kirchhoff equations on
$e(4)$ is defined by the Hamiltonian:
$$
\begin{aligned}
2H_{Ch}=&A_{1212}M_{12}^2+A_{1313}(M_{13}^2+M_{14}^2+M_{23}^2+M_{24}^2)+A_{3434}M_{34}^2+\\
&A_{1234}M_{12}M_{34}+A_{1213}M_{12}M_{13}+A_{1214}M_{12}M_{14}+\\
&A_{1223}M_{12}M_{23}+A_{1224}M_{12}M_{24}+A_{1334}M_{13}M_{34}+\\
&A_{1434}M_{14}M_{34}+A_{2334}M_{23}M_{34}+A_{2434}M_{24}M_{34}+\\
&B_{121}M_{12}p_1+B_{122}M_{12}p_2+B_{123}M_{12}p_3+B_{124}M_{12}p_4+\\
&B_{341}M_{34}p_1+B_{342}M_{34}p_2+B_{343}M_{34}p_3+B_{344}M_{34}p_4+\\
&C_{11}(p_1^2+p_2^2)+C_{33}(p_3^2+p_4^2).
\end{aligned}
$$
\end{Definition}

One can easily check that in this case $M_{12}$ and $M_{34}$ are
really the invariant relations.

\section*{Acknowledgments}
The research was partially supported by the Serbian Ministry of Education and
Science, Project 174020 Geometry and Topology of Manifolds,
Classical Mechanics and Integrable Dynamical Systems and by the Mathematical Physics Group of the University of
Lisbon, Project Probabilistic approach to finite and infinite dimensional dynamical systems, PTDC/MAT/104173/2008. The authors
would like to express their gratitude to Academician V. V. Kozlov
for fruitful discussions and to Professor A. V. Borisov for helpful
remarks.

\end{document}